\newlength\imagewidth
 \newtheorem{lemma}{Lemma}[section]
 \newtheorem{theorem}{Theorem}[section]
\definecolor{dgreen}{rgb}{0,.6,0}
\begin{document}

\begin{frontmatter}


\title{Construction of modulated amplitude waves via averaging in collisionally inhomogeneous Bose-Einstein condensates
}

\medskip
\author[Rome,Paestum]{Qihuai Liu\corref{cor1}}
\ead{qhuailiu@gmail.com}
\author[Rome]{Dingbian Qian}
\ead{dbqian@suda.edu.cn} \cortext[cor1]{Corresponding author at:
School of Mathematics and Computing Science, Guilin University of
Electronic Technology, No. 2, Jinji Street, Guilin 541004, China.
Tel./Fax.: +086 0773 3939803. }

\address[Rome]{School of Mathematics and Computing Science, Guilin University of Electronic Technology,
Guilin 541002, China}
\address[Paestum]{School  of Mathematical Sciences, Soochow  University, Suzhou 215006,  China}

\begin{abstract}
We apply the averaging method to analyze spatio-temportal
structures in nonlinear Schr\"{o}dinger equations and thereby
study the dynamics of quasi-one-dimensional collisionally
inhomogeneous Bose-Einstein condensates with the scattering length
varying periodically in spatial and crossing zero. Infinitely many
(positive measure set) modulated amplitude waves (periodic and
quasi-periodic), which are instable, can be proved to exist by
adjusting the integration constant $c$ on some open interval.
Finally, some numerical simulations support our results.


\end{abstract}

\begin{keyword}
Modulated amplitude waves; Gross-Pitaevskii equations;
Collisionally inhomogeneous Bose-Einstein condensates; Averaging
method




\PACS: 05.45.-a\sep 03.75.Lm\sep05.30.Jp\sep 05.45.Ac
\end{keyword}

\end{frontmatter}

\section{Introduction}
\label{Sec 1} Since the experimental realization of Bose-Einstein
condensates (BECs) in the mid-1990s
\cite{anderson1995observation,davis1995bose}, the study of
matter-wave patterns including existence and stability in BECs has
drawn a great deal of interest from experimentalists
\cite{strecker2002formation,strecker2003bright} and theorists
\cite{kevrekidis2003feshbach,porter2005bose,kapitula1998stabilityb,zharnitsky2005averaging}.

In atomic physics, the Feshbach resonance of the scattering length
of interatomic interactions is used for control of Bose-Einstein
condensates \cite{cornish2000stable,inouye1998observation}. We
consider the main model of this paper for Feshbach resonance given
by the perturbed Gross-Pitaevskii (GP) equation of the
dimensionless form \cite{porter2007modulated}
\begin{equation}\label{1.1}
 i\frac{\partial\psi}{\partial t}=-\frac{1}{2}\frac{\partial^2\psi}{\partial
 x^2}+\tilde{g}(x)|\psi|^2\psi+\tilde{V}(x)\psi,
\end{equation}
where the nonlinearity coefficient $\tilde{g}(x)$ varies in space.
In Eq.~(\ref{1.1}), $\psi$ is the mean-field condensate wave
function (with density  $|\psi|^2$ measured in units of the peak
1D density $n_0$), $x$ and $t$ are normalized, respectively, to
the healing length $\xi=\hbar/\sqrt{n_0|g_1|m}$ and $\xi/c$ (where
$c=\hbar\sqrt{n_0|g_1|/m}$ is the Bogoliubov speed of sound), and
energy is measured in units of the chemical potential $\delta=
g_1n_0$. In the above expressions, $g_1=2\hbar\omega_\perp a_0$,
where $\omega_\perp$ denotes the confining frequency in the
transverse direction, and $a_0$ is a characteristic (constant)
value of the scattering length relatively close to the Feshbach
resonance. Finally, $\tilde{V}(x)$ is the rescaled external
trapping potential, and the $x$-dependent nonlinearity is given by
$\tilde{g}(x)=a(x)/a_0$, where $a(x)$ is the spatially varying
scattering length.

In the past few years, GP equation (\ref{1.1}) has been widely
studied, such as the stability and dynamics of bright, dark
solitary waves
\cite{rapti2007solitary,theocharis2005matter,pelinovsky2003averaging,pelinovsky2004hamiltonian}
and modulated amplitude waves (MAWs) \cite{porter2007modulated}.

In order to study the dynamics of BECs with scattering length
subjected to a spatially periodic variation, Porter and Kevrekidis
et al. \cite{porter2007modulated} transform equation (\ref{1.1})
into a new GP equation with a constant coefficient and an
additional effective potential
\begin{align*}
 i\frac{\partial\psi}{\partial t}&=-\frac{1}{2}\frac{\partial^2\psi}{\partial
 x^2}+|\psi|^2\psi+\tilde{V}(x)\psi+\tilde{V}_{eff}(x)\psi,\\
\tilde{V}_{eff}(x)&=\frac{1}{2}\frac{f''}{f}-\frac{(f')^2}{f^2}+\frac{f'}{f}\frac{\partial}{\partial
x}
\end{align*}
with $f(x)=\sqrt{\tilde{g}(x)}$, then the transformed equation was
investigated. For weak underlying inhomogeneity, the effective
potential takes a form resembling a superlattice, and the
amplitude dynamics of the solutions of the constant-coefficient GP
equation obey a nonlinear generalization of the Ince equation. In
the small-amplitude limit, they use averaging to construct
analytical solutions for modulated amplitude waves (MAWs), whose
stability was subsequently examined using both numerical
simulations of the original GP equation and fixed-point
computations with the MAWs as numerically exact solutions.
However, mentioned in their paper, the transformation
$\tilde{\psi}=\sqrt{\tilde{g}(x)}\psi$ applies only in the case
when $\tilde{g}(x)$ does not cross zero. A natural question is
that for general periodic function $\tilde{g}(x)$, whether the
similar results upon the dynamics can be obtained. On the other
hand, the phases of MAWs considered in \cite{porter2007modulated}
are trivial, which are corresponding to standing waves. Thus,
another question is that whether the MAWs with nontrivial phases
can exist.

With these questions discussed above, in this paper we investigate
the existence and stability of MAWs with nontrivial phases in
collisionally inhomogeneous BECs modeled by GP equation
(\ref{1.1}) for general small periodic function $\tilde{g}(x)$.
The method is based on averaging, and we use the averaging
principle to replace a GP equation by the corresponding averaged
system. Along this paper, we assume that $g(x)$ and $V(x)$ are
analytic and periodic functions with the least positive period
$T=\pi/\sqrt{\delta}$.

The rest paper is organized as follows. In Section \ref{SEC 2}, we
introduce modulated amplitude waves involving periodic and
quasi-periodic, and an averaging theorem is obtained in Section
\ref{Sec 3}. In Section \ref{Sec 4} we investigate the existence
and stability of equilibrium points for the averaged system and
thereby study the periodic orbits and  a numerical simulation is
presented as prescribed parameters in Section \ref{Sec 5}.
Finally, we summarize our results in Section \ref{Sec 6}.

\section{Coherent structure}\label{SEC 2}\setcounter{equation}{0}

we consider uniformly propagating coherent structures with the
ansatz
\begin{equation}\label{2.1}
\psi(t,x)=R(x)\exp(i[\Theta(x)-\mu t]),
\end{equation}
where $R(x)\in\mathbb{R}$ gives the amplitude dynamics of the
condensate wave function, $\theta(x)$ determines the phase
dynamics, and the ``chemical potential'' $\mu$, defined as the
energy it takes to add one more particle to the system, is
proportional to the number of atoms trapped in the condensate.
When the (temporally periodic) coherent structure (\ref{2.1}) is
also spatially periodic, it is called a \emph{modulated amplitude
wave} (MAW) \cite{brusch2000modulated,brusch2001modulated}.
Similarly, a solution of the equation (\ref{1.1}) with the
(temporally periodic) coherent structure (\ref{2.1}) is called a
\emph{quasi-periodic modulated amplitude wave} (QMAW) if it is
also spatially quasi-periodic.

Inserting (\ref{2.1}) into (\ref{1.1}), we obtain the following
two couple nonlinear ordinary differential equations
\begin{equation}\label{3}
R''+\delta R-\frac{c^2}{R^3}+\varepsilon g(x) R^3+\varepsilon
V(x)R=0,
\end{equation}
\begin{equation}\label{4}
\Theta''+2\Theta'R'/R=0~\Rightarrow~\Theta'(x)=\frac{c}{R^2},\quad\quad
\end{equation}
where
\begin{equation*}
    \varepsilon g(x):=-\tilde{g}(x),~\varepsilon
    V(x):=-\tilde{V}(x)
\end{equation*}
and the integration constant $c$, determined by the velocity and
number density, plays the role of ``angular momentum''
\cite{bronski2001bose}.

In case of $c=0$, the phase of the condensate wave function
(standing wave) is trivial and constant. In the general case,
$c\neq 0$, the system (\ref{3}) becomes more
 complicated and the phase is no longer constant
 \cite{chong2004spatial}. Even the amplitude $R(x)$, a solution of
 (\ref{3}), is $T$-periodic, the corresponding condensate wave
 function $\psi(x,t)$ may be not periodic, but quasi-periodic, with respect to the
 spatial variable $x$ \cite{liu2011}.

\section{Averaging theorem}\label{Sec
3}\setcounter{equation}{0}

Rewrite equation (\ref{3}) in the planar equivalent form
\begin{equation}\label{3.1}
\left\{ \begin{array}{ll}
R'=S\\
S'=-\delta R+\displaystyle\frac{c^2}{R^3}-\varepsilon g(x)
R^3-\varepsilon V(x)R.
\end{array}
\right.
\end{equation}
Generally, averaging method involves two steps: transforming to
standard form; solving the averaging equation. In order to proceed
we need to transform (\ref{3.1}) to a standard form for the method
of averaging.

\begin{lemma}\label{LM 3.1} Under the transformation $\Psi:\mathbb{T}\times
\big(\sqrt[4]{\frac{c^2}{\delta}},+\infty \big)\rightarrow
(0,+\infty)\times \mathbb{R}$ defined by
\begin{equation*}
\left\{ \begin{array}{ll} R
=\rho\sqrt{\cos^2(\sqrt{\delta}x+\theta)+\displaystyle\frac{c^2}{\delta\rho^4}\sin^2(\sqrt{\delta}x+\theta)}\\[2em]
S
=\rho\sqrt{\delta}\left(\displaystyle\frac{c^2}{\delta\rho^4}-1\right)\displaystyle\frac{\cos(\sqrt{\delta}x+\theta)\sin(\sqrt{\delta}x+\theta)}
{\sqrt{\cos^2(\sqrt{\delta}x+\theta)+\displaystyle\frac{c^2}{\delta\rho^4}\sin^2(\sqrt{\delta}x+\theta)}}
,
\end{array}
\right.
\end{equation*}
system \emph{(\ref{3.1})} changes into a new system
\begin{equation}\label{3.2}
\left\{ \begin{array}{llll} \rho'
=\varepsilon\left\{\displaystyle\frac{g(x)}{\sqrt{\delta}}\rho^3\left[\frac{1}{4}(1+\frac{c^2}{\delta\rho^4})\sin2(\sqrt{\delta}x+\theta)
+\frac{1}{8}(1-\frac{c^2}{\delta\rho^4})\sin4(\sqrt{\delta}x+\theta)\right]\right.\\
~~~~~~~~~~~~~~~~~~~~~~~~~~~~~~~~~~~~~+\left.\displaystyle\frac{\rho}{2\sqrt{\delta}}V(x)\sin2(\sqrt{\delta}x+\theta)\right\}\\[2em]
\theta'
=\varepsilon\left\{\displaystyle\frac{g(x)(\delta\rho^4+c^2)}{8\delta^2\rho^2}\big(3+\cos4(\sqrt{\delta}x+\theta)\big)+
\displaystyle\frac{g(x)(\delta^2\rho^8+c^4)}{2\delta^2\rho^2(\delta\rho^4-c^2)}\cos2(\sqrt{\delta}x+\theta) \right.\\[1em]
~~~~~~~~~~~~~~~~~~~~~~~~~~~~~~~~~~~~~+\left.\displaystyle\frac{1}{2\delta}V(x)\left(1+\frac{\delta\rho^4+c^2}{\delta\rho^4-c^2}\cos2(\sqrt{\delta}x+\theta)\right)\right\}
\end{array}
\right.
\end{equation}
with the new coordinates $(\theta,\rho)$ in the half-plane
$\mathbb{T}\times \big(\sqrt[4]{\frac{c^2}{\delta}},+\infty
\big)$.
\end{lemma}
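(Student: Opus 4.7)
The plan is to recognize $\Psi$ as a parametrization of the closed orbits of the unperturbed system and then derive (\ref{3.2}) by a standard variation-of-parameters computation. First I would observe that for $\varepsilon=0$ the system (\ref{3.1}) is Hamiltonian with energy $H(R,S)=\tfrac12 S^2+\tfrac12\delta R^2+c^2/(2R^2)$; the effective potential $\tfrac12\delta R^2+c^2/(2R^2)$ is isochronous, so every orbit with $H>c\sqrt{\delta}$ is closed and can be labeled by its maximal amplitude $\rho$, at which $H=\tfrac12\delta\rho^2+c^2/(2\rho^2)$. Setting $\phi:=\sqrt{\delta}x+\theta$, a direct substitution shows that with $\rho,\theta$ held constant the formulas for $R,S$ in $\Psi$ satisfy the unperturbed equations $R'_x=S$ and $S'_x=-\delta R+c^2/R^3$; equivalently, viewed as functions of $(\rho,\phi)$ one has the identities $R_\phi=S/\sqrt{\delta}$ and $S_\phi=(-\delta R+c^2/R^3)/\sqrt{\delta}$, which will drive the entire computation.

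Next I would let $\rho,\theta$ depend on $x$ and apply the chain rule to get
\begin{equation*}
R'_x=R_\rho\rho'+R_\phi(\sqrt{\delta}+\theta'),\qquad S'_x=S_\rho\rho'+S_\phi(\sqrt{\delta}+\theta').
\end{equation*}
Subtracting the unperturbed identities above and equating with (\ref{3.1}) produces the linear system
\begin{equation*}
R_\rho\rho'+R_\phi\theta'=0,\qquad S_\rho\rho'+S_\phi\theta'=-\varepsilon\bigl[g(x)R^3+V(x)R\bigr],
\end{equation*}
which Cramer's rule inverts to $\rho'=\varepsilon(gR^3+VR)R_\phi/\Delta$ and $\theta'=-\varepsilon(gR^3+VR)R_\rho/\Delta$ with $\Delta:=R_\rho S_\phi-R_\phi S_\rho$. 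I would compute $\Delta$ once, cleanly, by exploiting the Hamiltonian identity $\partial_\rho H=SS_\rho+R_\rho(\delta R-c^2/R^3)$: combined with the formulas for $R_\phi,S_\phi$ this collapses $\Delta$ to $-\partial_\rho H/\sqrt{\delta}=-(\delta\rho^4-c^2)/(\sqrt{\delta}\rho^3)$. This expression is nonzero precisely on the stated domain $\rho>\sqrt[4]{c^2/\delta}$, which also certifies that $\Psi$ is a local diffeomorphism there.

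The remaining and only nontrivial step is the trigonometric simplification. Using the product-to-sum formulas together with the identities
\begin{equation*}
R^2=\tfrac12\bigl(\rho^2+c^2/(\delta\rho^2)\bigr)+\tfrac12\bigl(\rho^2-c^2/(\delta\rho^2)\bigr)\cos 2\phi,
\end{equation*}
\begin{equation*}
\cos^2\phi-\frac{c^2}{\delta\rho^4}\sin^2\phi=\frac{(\delta\rho^4-c^2)+(\delta\rho^4+c^2)\cos 2\phi}{2\delta\rho^4},
\end{equation*}
one rewrites $\rho R^2\sin 2\phi$ (entering $\rho'$) and $R^2[\cos^2\phi-(c^2/\delta\rho^4)\sin^2\phi]$ (entering $\theta'$) as explicit combinations of the Fourier modes $\sin 2\phi,\sin 4\phi,\cos 2\phi,\cos 4\phi$. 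The $V$-contribution to $\rho'$ reduces at once to the $\sin 2\phi$ term in (\ref{3.2}), and the $V$-contribution to $\theta'$ to the bracketed $\bigl(1+\tfrac{\delta\rho^4+c^2}{\delta\rho^4-c^2}\cos 2\phi\bigr)$ piece. The $g$-contribution to $\theta'$ furnishes the two coefficients with numerators $\delta\rho^4+c^2$ (attached to $3+\cos 4\phi$) and $\delta^2\rho^8+c^4$ (attached to the resonant $\cos 2\phi$); the latter inherits the denominator $\delta\rho^4-c^2$ from $1/\Delta$, reflecting the expected degeneracy of $\Psi$ at the elliptic fixed point of the unperturbed isochronous flow. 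Keeping track of these factors is the main---though entirely algebraic---obstacle, and reproduces (\ref{3.2}).
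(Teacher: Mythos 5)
Your proposal is correct and is precisely the argument the paper has in mind: the paper offers no proof of Lemma~\ref{LM 3.1}, saying only that $\Psi$ ``arises from the variation of constants by using the solutions of the unperturbed system'' and deferring the ``basic computation (maybe lengthy)'' to \cite{liu2011}. Your variation-of-parameters scheme --- checking that $\Psi$ parametrizes the closed orbits of the isochronous unperturbed flow, inverting the resulting linear system for $(\rho',\theta')$ by Cramer's rule, and evaluating the Jacobian $\Delta=R_\rho S_\phi-R_\phi S_\rho$ once and for all through the energy identity $\Delta=-\partial_\rho H/\sqrt{\delta}=-(\delta\rho^4-c^2)/(\sqrt{\delta}\,\rho^3)$ --- is exactly that computation, and the energy shortcut for $\Delta$ (which also shows $\Delta$ is independent of $\phi$ and nonvanishing on the stated domain) is cleaner than brute-force differentiation.

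One caveat about your final claim that the algebra ``reproduces (\ref{3.2})'': carried through, your formulas give the printed $\rho'$ line exactly, but the $\theta'$ line comes out with denominators $8\delta^{3/2}\rho^2$, $2\delta^{3/2}\rho^2(\delta\rho^4-c^2)$ and $2\sqrt{\delta}$ in place of the printed $8\delta^{2}\rho^2$, $2\delta^{2}\rho^2(\delta\rho^4-c^2)$ and $2\delta$. For instance, evaluating $-\varepsilon V R R_\rho/\Delta$ at $\sqrt{\delta}x+\theta=0$ gives $\varepsilon V\sqrt{\delta}\rho^4/(\delta\rho^4-c^2)$, whereas the printed formula gives $\varepsilon V\rho^4/(\delta\rho^4-c^2)$; a dimensional check (with $[\delta]=[x]^{-2}$ and $[V]=[\delta]$) confirms the printed $\theta'$ is dimensionless rather than of dimension $1/[x]$. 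So the statement as printed carries a systematic spurious factor of $\sqrt{\delta}$ in the $\theta'$ equation (invisible in the paper's numerics, where $\delta=1$). This is a defect of the stated lemma, not of your method; your derivation is the correct one.
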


The transformation $\Psi$ arises from the variation of constant by
using the solutions of the unperturbed system ($\varepsilon=0$),
and $\rho$ plays the role of ``energy". When taking the
integration constant $c=0$, the transformation $\Psi$ is the usual
change of polar coordinates in the half plane. The proof the Lemma
\ref{LM 3.1} follows from the basic computation (maybe lengthy),
and it can be found in \cite{liu2011}.

 Now write the $T$-periodic
functions $g(x),V(x)$ as the Fourier series
\begin{align}
   \label{3.3} g(x)&=g_0+\sum_{k=1}^{\infty}(\alpha_k \sin
    2k\sqrt{\delta}x+\beta_k \cos
    2k\sqrt{\delta}x),\\
   \label{3.4}     V(x)&=v_0+\sum_{k=1}^{\infty}(a_k \sin
    2k\sqrt{\delta}x+b_k \cos
    2k\sqrt{\delta}x).
\end{align}
After inserting (\ref{3.3}) and (\ref{3.4}) into (\ref{3.2}) and
then multiplying the right side of (\ref{3.2}) by $1/T$ and
integrating from $0$ to $T$, we obtain the averaged system
\begin{equation}\label{3.5}
\left\{ \begin{array}{llll} \rho'
=\varepsilon\left\{\displaystyle\frac{\delta
\bar{\rho}^4+c^2}{8\delta\sqrt{\delta}\bar{\rho}}A\sin(2\bar{\theta}+\phi_1)
 +\displaystyle\frac{\bar{\rho}}{4\sqrt{\delta}}B\sin(2\bar{\theta}+\phi_2)\right\}:=\varepsilon
F_1(\bar{\theta},\bar{\rho})\\[2em]
\theta'
=\varepsilon\left\{\displaystyle\frac{3g_0(\delta\bar{\rho}^4+c^2)}{8\delta^2\bar{\rho}^2}
+
\displaystyle\frac{(\delta^2\bar{\rho}^8+c^4)}{4\delta^2\bar{\rho}^2(\delta\bar{\rho}^4-c^2)}A\cos(2\bar{\theta}+\phi_1)\right.\\[1em]
~~~~~~~~~~~~~~~~~~~~~~~~~+\left.\displaystyle\frac{v_0}{2\delta}+
\displaystyle\frac{1}{4\delta}\frac{\delta\bar{\rho}^4+c^2}{\delta\bar{\rho}^4-c^2}B\cos(2\bar{\theta}+\phi_2)\right\}:=\varepsilon
F_2(\bar{\theta},\bar{\rho}),
\end{array}
\right.
\end{equation}
where
\begin{align*}
    A&=\sqrt{\alpha_1^2+\beta_1^2},~~~B=\sqrt{a_1^2+b_1^2},\\
    \phi_1&=\arctan\frac{\alpha_1}{\beta_1},~(\phi_1=\frac{\pi}{2}\cdot\mathrm{sign}(\phi_1),~\text{if}~\beta_1=0),\\
    \phi_1&=\arctan\frac{a_1}{b_1},~(\phi_1=\frac{\pi}{2}\cdot\mathrm{sign}(a_1),~\text{if}~b_1=0).
\end{align*}

\begin{theorem}\label{TH 3.1}
\textbf{\emph{[~Averaging theorem~]}} There exists a $c^r,r\geq2$,
change of variables
\begin{equation*}
    \rho=\bar{\rho}+\varepsilon
    w_1(\bar{\theta},\bar{\rho},x,\varepsilon),~~  \theta=\bar{\theta}+\varepsilon w_2(\bar{\theta},\bar{\rho},x,\varepsilon)
\end{equation*}
with $w_1,w_2$ $T$-periodic functions of $x$, transforming
\emph{(\ref{3.2})} into
\begin{equation}\label{3.6}
\left\{ \begin{array}{ll} \bar{\rho}'=\varepsilon
F_1(\bar{\theta},\bar{\rho})+\varepsilon^2
    g_1(\bar{\theta},\bar{\rho},x,\varepsilon)\\[0.5em]
\bar{\theta}'=\varepsilon F_2(\bar{\theta},\bar{\rho})
+\varepsilon^2
    g_2(\bar{\theta},\bar{\rho},x,\varepsilon)
\end{array}
\right.
\end{equation}
with $g_1,g_2$ $T$-periodic functions of $x$. Moreover,

\emph{(i)} If $(\theta_\varepsilon(x),\rho_\varepsilon(x))$ and
$(\bar{\theta}(x),\bar{\rho}(x))$ are solutions of the original
system \emph{(\ref{3.2})} and averaged system \emph{(\ref{3.5})}
respectively, with the initial value such that
\begin{equation*}
    |\,\rho_\varepsilon(0)-\bar{\rho}(0)|+|\,\theta_\varepsilon(0)-\bar{\theta}(0)|=\mathcal{O}(\varepsilon),
\end{equation*}
then
\begin{equation*}
    |\,\rho_\varepsilon(x)-\rho_0(x)|+|\,\theta_\varepsilon(x)-\theta_0(x)|=\mathcal{O}(\varepsilon),
\end{equation*}
for times $x$ of order $1/\varepsilon$.

\emph{(ii)} If  $P_0$ is an equilibrium point of
\emph{(\ref{3.5})} such that the corresponding Jacobian matrix has
no eigenvalue equal to zero, then \emph{(\ref{3.2})} admits a
$T$-periodic solution
$(\theta_\varepsilon(t),\rho_\varepsilon(t))$ such that
    $|(\rho_\varepsilon(t),\theta_\varepsilon(t))-P_0|=\mathcal{O}(\varepsilon),$
for sufficiently small $\varepsilon$; if  $P_0$ is a hyperbolic
equilibrium point of \emph{(\ref{3.5})}, then there exists
$\epsilon_0>0$  such that, for all $0<\varepsilon<\varepsilon_0$,
system (\ref{3.2}) possesses a hyperbolic periodic orbits
$\gamma_\varepsilon(x)=P_0+\mathcal{O}(\varepsilon)$  of the same
stability type as $P_0$.

\emph{(iii)}~If~$(\theta_\varepsilon(x),\rho_\varepsilon(x))\in
W^s(\gamma_\varepsilon(x))$~is a solution of system (\ref{3.2})
lying in the stable manifold of the hyperbolic periodic orbit
$\gamma_\varepsilon(x)=P_0+\mathcal{O}(\varepsilon)$,
$(\bar{\theta}(x),\bar{\rho}(x))\in W^s(P_0)$ is a solution of
system (\ref{3.5}) lying in the stable manifold of the hyperbolic
equilibrium point $P_0$ and
 \begin{equation*}
    |(\theta_\varepsilon(0)),\rho_\varepsilon(0)-(\bar{\theta}(0),\bar{\rho}(0))|=\mathcal{O}(\varepsilon),
\end{equation*}
then
 \begin{equation*}
    |(\theta_\varepsilon(x),\rho_\varepsilon(x))-(\bar{\theta}(x),\bar{\rho}(x))|=\mathcal{O}(\varepsilon),
\end{equation*}
for $x\in [0,+\infty)$. Similar results apply to solutions
 lying in the instable manifold on the interval $x\in(-\infty,0]$.

\emph{(iv)} If $P_0$ is an equilibrium point of system
\emph{(\ref{3.5})} and there exists a neighborhood $U(r;P_0)$ of
$P_0$ (with radius $r$ and center $P_0$) such that there is not
another equilibrium point in the closure of $U$ and
\begin{equation*}
    \mathrm{deg}(F,U,P_0)\neq 0
\end{equation*}
with $F=(F_1,F_2)$. Then for $|\varepsilon|>0$ sufficiently small,
there exist a $T$-periodic solution $\varphi_\varepsilon(x)$ of
system (\ref{3.2}) such that
\begin{equation*}
\varphi_\varepsilon(\cdot,\varepsilon)\rightarrow
P_0£¬~~as~\varepsilon\rightarrow 0.
\end{equation*}
\end{theorem}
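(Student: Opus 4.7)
The plan is to prove the theorem in two stages: first construct the near-identity averaging transformation, then derive the four dynamical consequences by standard tools of perturbation theory. Writing the right-hand sides of \eqref{3.2} as $\varepsilon F_i^{*}(\bar\theta,\bar\rho,x)$, the averaged vector field is $F_i(\bar\theta,\bar\rho) = T^{-1}\int_0^T F_i^{*}(\bar\theta,\bar\rho,s)\,ds$. I would define the generating functions
\[
w_i(\bar\theta,\bar\rho,x) = \int_0^x \bigl[F_i^{*}(\bar\theta,\bar\rho,s) - F_i(\bar\theta,\bar\rho)\bigr]\,ds, \qquad i=1,2,
\]
which are $T$-periodic in $x$ because $F_i^{*} - F_i$ has zero mean on $[0,T]$, and are $C^r$ in $(\bar\theta,\bar\rho)$ for any $r$ by the analyticity of $g$ and $V$. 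Substituting $\rho = \bar\rho + \varepsilon w_1$, $\theta = \bar\theta + \varepsilon w_2$ into \eqref{3.2} and expanding via the chain rule, the non-autonomous $O(\varepsilon)$ terms collapse to $\varepsilon F_i(\bar\theta,\bar\rho)$ exactly by construction of $w_i$, leaving an $O(\varepsilon^2)$ remainder that inherits $T$-periodicity in $x$; this produces \eqref{3.6}.

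For (i) I would subtract \eqref{3.5} from \eqref{3.6} along corresponding solutions and apply Gronwall's inequality with the Lipschitz constant of $F=(F_1,F_2)$ on a fixed compact set; the $O(\varepsilon^2)$ remainder integrated over a time interval of length $1/\varepsilon$ yields the required $O(\varepsilon)$ bound. For (ii) let $\Pi_\varepsilon$ be the time-$T$ Poincaré map of \eqref{3.6} on a two-dimensional section and consider the displacement $G(p,\varepsilon) = (\Pi_\varepsilon(p) - p)/\varepsilon$, which extends continuously by $G(p,0) = T F(p)$ because $\Pi_\varepsilon(p) - p = \varepsilon T F(p) + O(\varepsilon^2)$. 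Since $DF(P_0)$ has no zero eigenvalue, the implicit function theorem applied to $G$ produces a fixed point $P_\varepsilon = P_0 + \mathcal{O}(\varepsilon)$ of $\Pi_\varepsilon$, i.e., a $T$-periodic solution. Hyperbolicity transfers because the eigenvalues of $D\Pi_\varepsilon(P_\varepsilon)$ have the form $1 + \varepsilon T \lambda_i(P_0) + O(\varepsilon^2)$ and lie on the same side of the unit circle as predicted by $\mathrm{Re}\,\lambda_i(P_0)$ once $\varepsilon$ is small.

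Part (iii) follows from the stable-manifold theorem applied to the hyperbolic fixed point $P_\varepsilon$: the perturbed $W^s(\gamma_\varepsilon)$ is a $C^1$-graph $\mathcal{O}(\varepsilon)$-close to $W^s(P_0)$, and on it the flow contracts at an exponential rate bounded below uniformly in $\varepsilon$; the variational equation for the difference of perturbed and averaged solutions has a contracting linear part plus an $O(\varepsilon^2)$ forcing, and a Gronwall argument exploiting the contraction upgrades the $1/\varepsilon$-interval estimate of (i) to a uniform estimate on $[0,\infty)$. The unstable case is symmetric under $x \mapsto -x$. Part (iv) is handled by Brouwer degree: $G(\cdot,\varepsilon) = T F + O(\varepsilon)$ on $\overline{U}$, and since $F$ does not vanish on $\partial U$, homotopy invariance together with the hypothesis $\deg(F,U,P_0) \neq 0$ gives $\deg(G(\cdot,\varepsilon), U, 0) \neq 0$ for small $\varepsilon$, producing a zero of $G(\cdot,\varepsilon)$ in $U$, i.e., a fixed point of $\Pi_\varepsilon$ and hence a $T$-periodic solution $\varphi_\varepsilon$ of \eqref{3.2}; the isolation of $P_0$ in $\overline{U}$ forces $\varphi_\varepsilon \to P_0$ as $\varepsilon \to 0$. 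The main obstacle is part (iii): one must set up the persistence of invariant manifolds so that the contraction rate on $W^s$ is bounded below uniformly in $\varepsilon$, so that the exponential decay can beat the $\varepsilon^2$ forcing on the whole half-line rather than only on the $1/\varepsilon$ averaging scale.
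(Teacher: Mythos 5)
Your proposal is correct, but it does not follow the paper's route everywhere. For parts (i)--(iii) the paper gives no argument at all: it simply invokes the standard first-order averaging theorems from Berglund and from Guckenheimer--Holmes. The constructions you supply (the generating functions $w_i=\int_0^x[F_i^{*}-F_i]\,ds$, the Gronwall estimate on the $1/\varepsilon$ time scale, the Poincar\'e-map/implicit-function-theorem argument with eigenvalues $1+\varepsilon T\lambda_i+O(\varepsilon^2)$, and the stable-manifold shadowing on $[0,\infty)$) are exactly the proofs living inside those references, so there you are filling in what the paper outsources; your identification of (iii) as the step needing a uniform-in-$\varepsilon$ contraction rate is the right point of care. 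The genuine divergence is in part (iv): the paper works in the function space $C([0,T],\mathbb{R}^2)$ with Mawhin's coincidence degree, building a homotopy $\mathcal{H}(\cdot,\lambda)=\lambda\varepsilon F+(1-\lambda)\varepsilon^2 G$ between the averaged field and the full remainder, proving an a priori bound excluding solutions on $\partial\Omega$, and then reducing $D_{\mathcal{L}}$ to $\deg(F,\Omega\cap\mathbb{R}^2,0)$ by the reduction property of the coincidence degree. You instead apply finite-dimensional Brouwer degree to the displacement function $G(p,\varepsilon)=(\Pi_\varepsilon(p)-p)/\varepsilon$, which extends continuously to $TF(p)$ at $\varepsilon=0$, and use homotopy invariance plus the nonvanishing of $F$ on $\partial U$. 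Both arguments are valid and both preserve the feature the paper emphasizes, namely that no smoothness of $F$ is required (only the degree hypothesis); your version is more elementary and avoids the $L$-compactness machinery, at the cost of having to justify the uniform expansion $\Pi_\varepsilon(p)-p=\varepsilon TF(p)+o(\varepsilon)$ on $\overline{U}$, which does follow from the normal form \eqref{3.6}. The paper's version packages that same computation into the abstract continuation framework and works directly with the periodic boundary-value problem rather than with a section map.
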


\begin{proof}
 The proof of (i)-(iii) follows directly from \cite{berglund2001perturbation} or
\cite{guckenheimer1983nonlinear}. The proof of (iv) is based a
framework of coincidence degree theory. Without loss of
generality, we assume $P_0=0$. We define homotpoy operator
$\mathcal{H}:C([0,T],\mathbb{R}^2)\times[0,1]\rightarrow
L_1([0,T],\mathbb{R}^2)$ by
\begin{equation*}
    \mathcal{H}(\bar{\theta},\bar{\rho},\lambda):=\lambda\varepsilon
    F(\bar{\theta},\bar{\rho})+(1-\lambda)\varepsilon^2
    G(\bar{\theta},\bar{\rho},x,\varepsilon),
\end{equation*}
where $G:=(g_1,g_2)$. According to \cite[Ch.
VI]{mawhintopological} , $\mathcal{H}$ is $L$-compact on
$\Omega\times[0,T]$, where $\Omega$ is a bounded open set of
$C([0,T],\mathbb{R}^2)$ defined by
\begin{equation*}
   \Omega:=\{(\bar{\theta},\bar{\rho})\in
   C([0,T],\mathbb{R}^2):\|(\bar{\theta},\bar{\rho})\|<r\}.
\end{equation*}
We remark that $(\bar{\theta},\bar{\rho})\in\Omega$ is a
$T$-periodic solution of system (\ref{3.6}) if and only if
$(\bar{\theta},\bar{\rho})$ is a  solution of
$\mathcal{L}(\bar{\theta},\bar{\rho})=\mathcal{H}(\bar{\theta},\bar{\rho},0)$
in $\overline{\Omega}$. Since there is not another equilibrium
point in the closure of $U$, we let
\begin{align*}
    &M_1=\min_{(\bar{\theta},\bar{\rho})\in \partial
    U}|F_1(\bar{\theta},\bar{\rho})-F_2(\bar{\theta},\bar{\rho})|>0,\\
    &M_2(\varepsilon)=\varepsilon\max \{|g_1(\bar{\theta},\bar{\rho},x,\varepsilon)-g_1(\bar{\theta},\bar{\rho},x,\varepsilon)|:(\bar{\theta},\bar{\rho},x,\varepsilon)\in
    \overline{U}\times[0,T]\}
\end{align*}
with $M_2(\varepsilon)\rightarrow0$ as $\varepsilon\rightarrow0$.
We also assume $M_2(\varepsilon)>0$, otherwise $P_0$ is a solution
of system (\ref{3.6}) and the result is proved.

 First, we claim that, for each $\varepsilon\in (-\varepsilon_0,0)\cup(0,\varepsilon_0)$ with $\varepsilon_0=\max_{\varepsilon\in[-1,1]}\{M_1/M_2(\varepsilon)\}$, there exists no solution
$(\bar{\theta},\bar{\rho})\in
\partial\Omega$ for the operator equation
\begin{equation}\label{3.7}
    \mathcal{L}(\bar{\theta},\bar{\rho})=\mathcal{H}(\bar{\theta},\bar{\rho},\lambda),~~\lambda\in(0,1].
\end{equation}
In fact, if $(\bar{\theta},\bar{\rho})\in\partial\Omega$ is a
solution of (\ref{3.7}), then there exists $\xi\in[0,T]$ such that
\begin{align*}
   &\bar{\theta}'(\xi)+\bar{\rho}'(\xi)=0,\\
&\|(\bar{\theta},\bar{\rho})\|=\max_{x\in[0,T]}\sqrt{\bar{\theta}^2(x)+\bar{\rho}^2(x)}=\sqrt{\bar{\theta}^2(\xi)+\bar{\rho}^2(\xi)}
\end{align*}
and
\begin{equation*}
    \bar{\theta}'(\xi)\bar{\rho}'(\xi)\leq0.
\end{equation*}
 Thus, it follows that
\begin{align*}
    0&=|\bar{\theta}'(\xi)+\bar{\rho}'(\xi)|\\
    &\geq|\varepsilon|\cdot|
F_1(\bar{\theta},\bar{\rho})-F_2(\bar{\theta},\bar{\rho})|-\varepsilon^2|g_1(\bar{\theta},\bar{\rho},x,\varepsilon)-g_2(\bar{\theta},\bar{\rho},x,\varepsilon)|\\
&\geq|\varepsilon|M_1-\varepsilon M_2(\varepsilon)>0,
\end{align*}
which is a contradiction.

Without loss of generality, we suppose that
\begin{equation}\label{3.8}
    \mathcal{L}(\bar{\theta},\bar{\rho})\neq\mathcal{H}(\bar{\theta},\bar{\rho},\lambda),~~(\bar{\theta},\bar{\rho})\in\partial\Omega
\end{equation}
  holds for
$\lambda\in[0,1]$. Otherwise, the result is proved for
$(\bar{\theta},\bar{\rho})\in\partial\Omega$. Thus, we can apply
the homotopy property of the coincidence degree and obtain
\begin{align*}
    |D_{\mathcal{L}}(\mathcal{L}-\mathcal{H}(\cdot,0),\Omega)|=|D_{\mathcal{L}}(\mathcal{L}-\mathcal{H}(\cdot,1),\Omega)|\\
    =|\mathrm{deg}(F,\Omega\cap\mathbb{R}^2,0)|\neq0.
\end{align*}
Hence, by the existence property of the coincidence degree, there
is $(\bar{\theta},\bar{\rho})\in\overline{\Omega}$ such that
$\mathcal{L}(\bar{\theta},\bar{\rho})=\mathcal{H}(\bar{\theta},\bar{\rho},0).$
Then $(\bar{\theta},\bar{\rho})$ is a $T$-periodic solution of
(\ref{3.6}). Thus, owing to the change of variables in this
theorem, there is a $T$-periodic solution $(\theta,\rho)$ for
system (\ref{3.5}).
\end{proof}

We remark that the proof of part (iv) of Theorem \ref{TH 3.1} does
not need the smoothness condition upon $F$. So, it is convenient
to deal with the existence of periodic solutions for nonlinear
systems with loss of smoothness by Theorem \ref{TH 3.1}.

\section{Periodic orbits and stability}\label{Sec
4}\setcounter{equation}{0}

To study the dynamics of MAWs or QMAWs for system (\ref{1.1}), we
must investigate the behavior of the periodic orbits for system
(\ref{3.1}) including the existence and stability. According to
the method of averaging, the equilibrium point of the averaged
system determines the properties of the periodic orbit of the
corresponding perturbed system. For example, the equilibrium point
with its eigenvalue of linearization nonzero implies that there
exists at least one periodic orbit; in addition, if the
equilibrium point is hyperbolic, then the periodic orbit has the
same type of stability as the equilibrium point, for sufficiently
small parameter $\varepsilon$.

In order to find periodic orbits of system (\ref{3.2}), it is
sufficient to find equilibrium points of the averaged system
(\ref{3.5}). In the following, we will discuss the existence and
stability of the equilibrium points for the averaged system
(\ref{3.5}). For simplification, we assume that
\begin{equation*}
    \phi_2=\pi+\phi_1,~~\phi_1=-\pi/2,~~g_0=A/3,~~v_0=-B/2.
\end{equation*}
Recalling system (\ref{3.5}), together with the assumption, we
have the averaged system
\begin{equation}\label{4.1}
\left\{ \begin{array}{llll} \rho'
=\varepsilon\left\{\displaystyle\frac{A}{8\sqrt{\delta}\bar{\rho}}\Big(\bar{\rho}^4-2\frac{B}{A}\bar{\rho}^2+\frac{c^2}{\delta}\Big)\sin(2\bar{\theta}+\phi_1)
\right\}\\[2em]
\theta' =\varepsilon\left\{\displaystyle\frac{1}{4\delta
\bar{\rho}^2\cdot(\delta\bar{\rho}^4-\frac{c^2}{\delta})}\left(\frac{A}{2}(\bar{\rho}^8-\frac{c^4}{\delta^2})
-B\bar{\rho}^2(\bar{\rho}^4-\frac{c^2}{\delta})\right.\right. \\
\left.\left.~~~~~~~~~~~~~~~~~~~~~~~~~~~~~~~~+\Big[A\big(\bar{\rho}^8+
\displaystyle\frac{c^4}{\delta^2}\big)-B\bar{\rho}^2\big(\bar{\rho}^4+\displaystyle\frac{c^2}{\delta}\big)\Big]\cos(2\bar{\theta}+\phi_2)\right)\right\}.
\end{array}
\right.
\end{equation}

Notice that there exists a constant $c_0>0$ such that for each
$c\in(0,c_0)$, the equation
\begin{equation}\label{4.2}
\bar{\rho}^4-2\frac{B}{A}\bar{\rho}^2+\frac{c^2}{\delta}=0
\end{equation}
has at least two real roots
\begin{equation*}
    \rho_{1,2}=\pm\sqrt{\frac{B}{A}+\sqrt{\frac{B^2}{A^2}-\frac{c^2}{\delta}}}\in(-\infty,-\sqrt[4]{c^2/\delta})\cup(\sqrt[4]{c^2/\delta},+\infty).
\end{equation*}
Moreover, equation (\ref{4.2}) implies that
\begin{equation*}
\frac{A}{2}(\bar{\rho}^8-\frac{c^4}{\delta^2})
-B\bar{\rho}^2(\bar{\rho}^4-\frac{c^2}{\delta})=0.
\end{equation*}
Thus, we can find four equilibrium points as follows
\begin{align*}
    P_{1,2}&:\Big(\pm\sqrt{\frac{B}{A}+\sqrt{\frac{B^2}{A^2}-\frac{c^2}{\delta}}},~k\pi+\frac{\pi}{4}-\frac{\phi_1}{2}\Big),\\
     P_{3,4}&:\Big(\pm\sqrt{\frac{B}{A}+\sqrt{\frac{B^2}{A^2}-\frac{c^2}{\delta}}},~k\pi-\frac{\pi}{4}-\frac{\phi_1}{2}\Big),~k\in\mathbb{Z}.
\end{align*}
The eigenvalues of the equilibrium points $P_{1,2}$ and $ P_{3,4}$
are given by
\begin{equation*}
    \lambda_{1,2}^{(1)}=\frac{\varepsilon
    A}{2\sqrt{\delta}}\sqrt{\frac{B^2}{A^2}-\frac{c^2}{\delta}}>0,~~\lambda_{1,2}^{(2)}=-\frac{\varepsilon
    A(\rho_{1,2}^4-\frac{c^2}{\delta})}{4\delta\rho_{1,2}^2}<0
\end{equation*}
and
$\lambda_{3,4}^{(1)}=-\lambda_{1,2}^{(1)},~\lambda_{3,4}^{(2)}=-\lambda_{1,2}^{(2)}$,
respectively. So, the equilibrium points $P_{1,2}$ and $ P_{3,4}$
are hyperbolic, and as a consequence persist as periodic orbits
for system (\ref{3.2}); in addition, these periodic orbits are
instable.

If $\bar{\theta}=k\pi-\phi_1/2$, to find the equilibrium points,
one will solve the following algebraic equation
\begin{equation}\label{4.3}
    \frac{A}{2}(\bar{\rho}^8-\frac{c^4}{\delta^2})
-B\bar{\rho}^2(\bar{\rho}^4-\frac{c^2}{\delta})+A\big(\bar{\rho}^8+
\displaystyle\frac{c^4}{\delta^2}\big)-B\bar{\rho}^2\big(\bar{\rho}^4+\displaystyle\frac{c^2}{\delta}\big)=0.
\end{equation}
Equation (\ref{4.3}) has at least two roots
\begin{equation*}
    \rho_{3,4}=\pm\sqrt{\frac{4B}{3A}+o(c)}\in(-\infty,-\sqrt[4]{c^2/\delta})\cup(\sqrt[4]{c^2/\delta},+\infty),
\end{equation*}
for sufficiently small integration constant $c>0$.

 \begin{figure}[htbp]
  \begin{center}
      \includegraphics[scale=0.6]{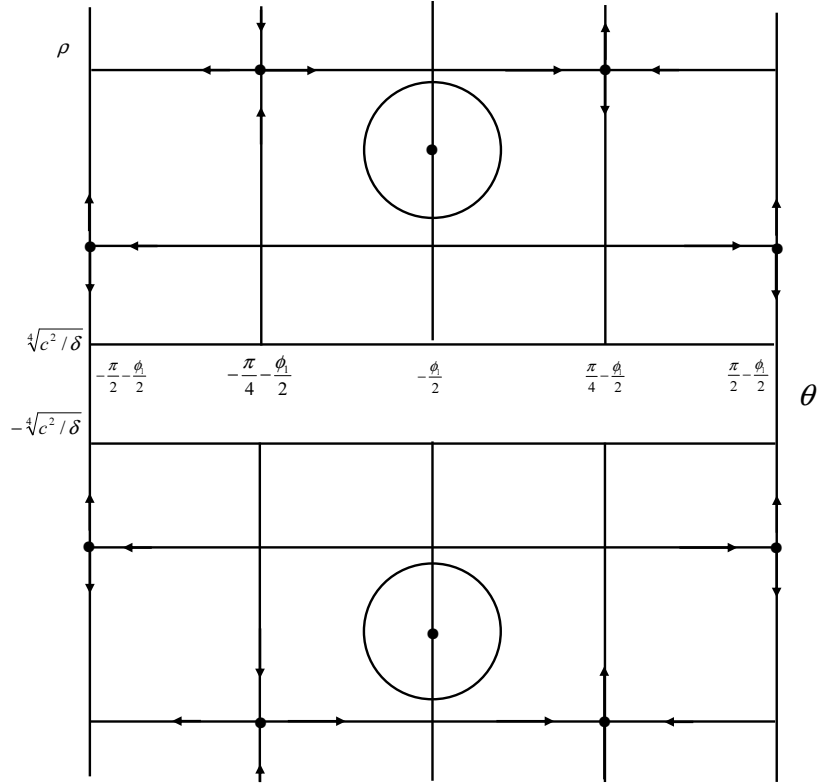}
  \end{center}
\caption{The phase portrait associated with the averaged system
(\ref{4.1}). All the equilibrium points persist as periodic orbits
for system (\ref{3.2}).} \label{Fig 1}
\end{figure}

If $\bar{\theta}=k\pi+\pi/2-\phi_1/2$, the algebraic equation
\begin{equation*}
    f(\rho):=\rho^8-\frac{B}{A}\rho^2+\frac{3c^4}{\delta^2}=0
\end{equation*}
needs to be solved. Note that
\begin{equation}\label{4.4}
f\Big(\pm\sqrt[4]{\frac{c^2}{\delta}}\Big)=4\frac{c^4}{\delta^2}-\frac{B}{A}\big(\frac{c^4}{\delta^2}\big)^{3/2}<0~~\text{and}~~
f(\pm\infty)=+\infty,
\end{equation}
for sufficiently small positive constant $c$. By the mean value
theorem, equation (\ref{4.4}) has two roots $\rho_{5,6}$ such that
\begin{equation*}
    \rho_{5,6}=\pm\sqrt[6]{\frac{c^2B}{\delta A}+o(c^2)}\in(-\infty,-\sqrt[4]{c^2/\delta})\cup(\sqrt[4]{c^2/\delta},+\infty),
\end{equation*}
for sufficiently small $c$. As a consequence, four equilibrium
points of the averaged system (\ref{4.1}) are obtained as follows
\begin{align*}
    P_{5,6}&:\Big(\pm\sqrt{\frac{4B}{3A}+o(c)},~k\pi-\frac{\phi_1}{2}\Big),\\
     P_{7,8}&:\Big(\pm\sqrt[6]{\frac{c^2B}{\delta A}+o(c^2)},~k\pi+\frac{\pi}{2}-\frac{\phi_1}{2}\Big),~k\in\mathbb{Z}.
\end{align*}
The eigenvalues of the linearization at the equilibrium points
$P_{5,6}$ and $ P_{7,8}$ are given by
\begin{equation*}
    \lambda_{5,6}^{(1,2)}= \pm i\varepsilon\sqrt{\frac{4B^2(\frac{8B^2}{9A}-\frac{c^2}{\delta})}{3\delta(\rho_{3,4}^4-\frac{c^2}{\delta})}}
\end{equation*}
and
\begin{equation*}
    \lambda_{7,8}^{(1,2)}= \pm
    \varepsilon \sqrt{\frac{2A^2c^4}
    {4\delta^{7/2}\rho_{5,6}^4(\rho_{5,6}^4-\frac{c^2}{\delta})}\Big(\frac{2B}{A}\rho_{5,6}^2-\rho_{5,6}^4-\frac{c^2}{\delta}\Big)},
\end{equation*}
respectively. The equilibrium points $P_{7,8}$ imply that that two
instable periodic orbit of system (\ref{3.2}) exist; while the
equilibrium points $P_{5,6}$ are nonlinear centers, and also
persist as periodic orbits for system (\ref{3.2}). The phase
portrait for system (\ref{4.1}) is given in Figure 1. Since the
periodic orbits corresponding to the equilibrium points $P_{5,6}$
are not hyperbolic, one can not conclude their stability. This
question is left open for further study.

In summary, for the equilibrium $P_i$ ($i=1,\cdots,4,7,8$), there
exists $\varepsilon_0>0$ such that, for all
$0<\varepsilon\leq\varepsilon_0$, system (\ref{3.2}) possesses a
unique hyperbolic periodic orbits
$\gamma_\varepsilon(x)=P_i+\mathcal{O}(\varepsilon)$, which is
instable.

We also remark that, if
$(\theta_\varepsilon(x),\rho_\varepsilon(x))\in
W_s(\gamma_\varepsilon(x))$ is a solution of system (\ref{3.2})
lying in the stable manifold of the hyperbolic periodic orbit
$\gamma_\varepsilon(x)=P_i+\mathcal{O}(\varepsilon)$,
$(\bar{\theta}(x),\bar{\rho}(x))\in W_s(P_0)$ is a solution of
system (\ref{4.1}) lying in the stable manifold of the hyperbolic
equilibrium $P_i$ ($i=1,\cdots,4,7,8$) and
$|(\theta_\varepsilon(0),\rho_\varepsilon(0))-(\bar{\theta}(0),\bar{\rho})(0)|=\mathcal{O}(\varepsilon)$,
 then
 $|(\theta_\varepsilon(x),\rho_\varepsilon(x))-(\bar{\theta}(x),\bar{\rho})(x)|=\mathcal{O}(\varepsilon)$,
 for all $x\in[0,+\infty)$. Similar results apply to solutions
 lying in the instable manifold on the interval $x\in(-\infty,0]$.

Although we study system (\ref{4.1}) for sufficiently small $c>0$,
$c$ also can be taken on a open interval $(0,\bar{c})$, for some
positive constant $\bar{c}$. By continuous dependence of solutions
with respect to the parameters, there is a connected set
$\mathcal{C}$ of $T$-periodic solutions for system (\ref{3.2}) and
then for system (\ref{3.1}). Since
\begin{eqnarray*}
    \psi(t,x)&=&R(x)\mathrm{exp}{i[\Theta(x)-\mu t]}\\
    &=&R(x)\big(\cos[\bar{\Theta}(x)+\nu x-\mu t]+i\sin[\bar{\Theta}(x)+\nu x-\mu
    t]\big),
\end{eqnarray*}
where
\begin{equation*}
    \nu=\frac{1}{T}\int_{x_0}^{x_0+T}\frac{c}{R^2(\xi)}\mathrm{d}\xi
\end{equation*}
and $\bar{\Theta}(x)=\Theta(x)-\nu$ is a $T$-periodic function
with zero mean value,  whether $\psi(t,x)$ is a MAW or QMAW
depends on the choosing of the integration constant $c$.
Precisely, If $2\pi/\nu$ and $T$ are rationally related, then
$\psi(x,t)$ is a MAW; if $2\pi/\nu$ and $T$ are rationally
irrelevant, then $\psi(x,t)$ is not periodic but quasi-periodic,
which is corresponding to a QMAW with the frequency
$\omega=\langle2\pi/\nu, T\rangle$.

\section{Numerical simulation}\label{Sec 5}\setcounter{equation}{0} To demonstrate the process
of averaging to BECs, a specific example of numerical computation
is given in the following. We take
\begin{align*}
g(x)&=\frac{1}{2}-\frac{3}{2}\sin2x,\\
V(x)&=2\sin2x-1
\end{align*}
and the parameters $\delta=1, c=0.5, \varepsilon=0.01$. Obviously,
$g(x)$ crosses zero. We can find equilibrium points for system
(\ref{4.1}) in the $(\theta,\rho)$-coordinates as follows
\begin{align*}
    P_{1,2}&:(k\pi,~\pm1.60),~~~~~~~~~~
     P_{3,4}:(k\pi+\frac{\pi}{2},~\pm1.60),~~\\
 P_{5,6}&:(k\pi+\frac{3\pi}{4},~\pm1.02),~~
      P_{7,8}:(k\pi+\frac{\pi}{4},~\pm 1.33).
\end{align*}
$P_{7,8}$ are nonlinear centers with eigenvalues of the
linearization $\lambda^{(1,2)}_7=\lambda^{(1,2)}_8=\pm
i\varepsilon$.

 \begin{figure}[htbp]
  \begin{center}
      \includegraphics[scale=0.8]{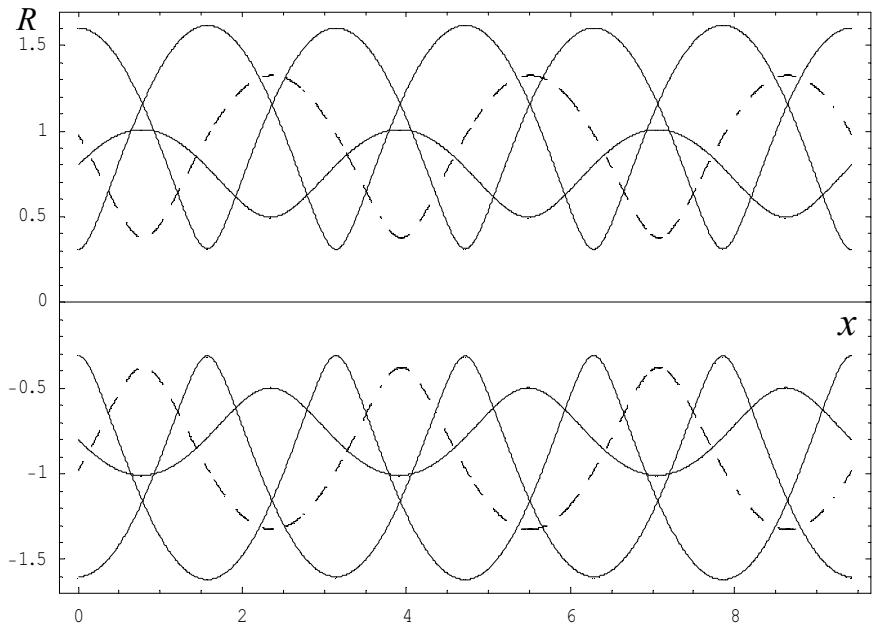}
  \end{center}
\caption{The solutions portrait of system (\ref{3.1}) with initial
value $(R(0),R'(0))=\tilde{P}_i,i=1,2,\cdots,8$. Here, we take
$g(x)=\frac{1}{2}-\frac{3}{2}\sin2x, V(x)=2\sin2x-1$ and the
parameters $\delta=1, c=0.5, \varepsilon=0.01$. These solutions
are a good approximation to the periodic orbits for $x$ of order
$1/\varepsilon$. The solutions with solid lines are instable.}
\label{Fig 2}
\end{figure}

Using the transformation $\Psi$, these equilibrium points in the
$(R,S)$-coordinates with $x=0$ are given by
\begin{align*}
    \tilde{P}_{1,2}&:(\pm1.60,0),~~~~~~~~~~
    \tilde{P}_{3,4}:(\pm0.31,~0),~~\\
 \tilde{P}_{5,6}&:(\pm0.80,~\pm0.48),~~
      \tilde{P}_{7,8}:(\pm 0.98,~\mp0.82).
\end{align*}
We plot the solutions of system (\ref{3.1}) starting from
$\tilde{P}_i,i=1,2,\cdots,8$, according to the averaged theorem,
which are a good approximation to the periodic orbits, see Figure
2.

\section{Conclusion}\label{Sec
6}\setcounter{equation}{0}

In conclusion, we have presented first-order averaging theorem in
the periodic case for dynamics of MAWs (or QMAWs) in collisionally
inhomogeneous BECs. The transformed system is non-Hamiltonian, and
we indicate how the averaging theorems can be used to prove the
existence and stability of periodic solutions. The questions as
mentioned in the introduction have been answered. When the
sufficiently small scattering length $a(x)$ varies periodically in
spatial variable $x$ and crosses zero, infinitely many (positive
measure set) MAWs and QMAWs can be proved to exist by adjusting
the integration constant $c$ on some open interval.

A numerical approximation of periodic orbits is given for some
prescribed parameters. We remark that, expanding at each
equilibrium point and combining with multiple scale perturbed
theory, such as work in
\cite{porter2004perturbative,porter2005bose,porter2004modulated,porter2004resonant},
there may be a better approximation for each continuation periodic
orbit. However, we emphasis on the theory frame of averaging to
 study dynamics of collisionally
inhomogeneous BECs.

In the end, it should be remember that the asymptotic
approximation are valid for small $\varepsilon$, but how small is
usually a difficult problem. However, one advantage of averaging
is obvious that it is set up for an easy return to the original
variables.

\section*{Acknowledgements}
This work is supported by the National Natural Science Foundation
of China (10871142) and Doctoral Fund of Ministry of Education of
China (20070285002).

\bibliographystyle{elsarticle-num-names}

\end{document}